\journal{Systems \& Control Letters}
\DeclareMathOperator{\trace}{trace}
\renewcommand{\vec}[1]{\mathbf{#1}}
\newtheorem{theorem}{Theorem}
\newtheorem{proposition}{Proposition}
\newtheorem{lemma}{Lemma}
\begin{document}

\begin{frontmatter}

\title{Optimal Control with Limited Sensing via\\
Empirical Gramians and Piecewise Linear Feedback}
%\tnotetext[mytitlenote]{Fully documented templates are available in the elsarticle package on \href{http://www.ctan.org/tex-archive/macros/latex/contrib/elsarticle}{CTAN}.}

%% Group authors per affiliation:
\author{Atiye~Alaeddini}
%\ead{email@uni.edu}
\address{Institute for Disease Modeling, Bellevue, WA, 98005}

\author{Kristi~A.~Morgansen}
%\address{University of Washington, William E. Boeing Department of Aeronautics and Astronautics, Seattle, WA 98195}
%\fntext[myfootnote2]{Senior~Member,~IEEE}
\author{Mehran~Mesbahi}
\address{University of Washington, William E. Boeing Department of Aeronautics and Astronautics, Seattle, WA 98195}
%\fntext[myfootnote3]{Fellow,~IEEE}

%% or include affiliations in footnotes:
%\author[mymainaddress,mysecondaryaddress]{Elsevier Inc}
%\ead[url]{www.elsevier.com}
%
%\author[mysecondaryaddress]{Global Customer Service\corref{mycorrespondingauthor}}
%\cortext[mycorrespondingauthor]{Corresponding author}
%\ead{support@elsevier.com}
%
%\address[mymainaddress]{1600 John F Kennedy Boulevard, Philadelphia}
%\address[mysecondaryaddress]{360 Park Avenue South, New York}

\begin{abstract}
This paper is concerned with the design of optimal control for finite-dimensional control-affine nonlinear dynamical systems. We introduce an optimal control problem that specifically optimizes nonlinear observability in addition to ensuring stability of the closed loop system. A recursive algorithm is then proposed to obtain an optimal state feedback controller to maximize the resulting non-quadratic cost functional. The main contribution of the paper is presenting a control synthesis procedure that provides closed loop asymptotic stability,
on one hand, and empirical observability of the system, as a transient performance criteria, on the other.
\end{abstract}

\begin{keyword}
%\texttt{elsarticle.cls}\sep \LaTeX\sep Elsevier \sep template
%\MSC[2010] 00-01\sep  99-00
Nonlinear observability\sep optimal state feedback\sep stability\sep gradient-based algorithm
\end{keyword}

\end{frontmatter}

\linenumbers

\section{Introduction}

A primary consideration in the control synthesis examined in this
paper is system observability. 
This is motivated by the inherent coupling in the
actuation and sensing in nonlinear systems; as such, one might be able to make a system more observable by changing the control inputs \cite{Alaeddini13,hinson2013observability,hinson2013path}. 

Our goal in this work is designing a process for efficiently choosing the inputs of a nonlinear
system to improve the overall sensing performance (measured by the empirical Gramian).
This task is motivated through problems where the reconstruction of the state is challenging
and it becomes imperative to actively acquire, sense, and estimate it; an example
of such a problem is robot localization operating in an unknown environment. 
The problem of active sensing has received significant attention in the robotics community in recent years. 
In this context, a robot can use the path that maximizes observability to more effectively localize itself or to effectively construct a map of the environment \cite{DeVries13, Lorussi01}. 
In \cite{DeVries13}, sampling trajectories for autonomous vehicles are selected to maximize observability of an environmental flow field. In that work, an iterative technique has been used on a finite sampling of parameters instead of examining all possible sampling trajectories. 
Moreover, the resulting optimized path was computed by an exhaustive search. 
In \cite{Lorussi01}, an optimal path is given for a particular case of omnidirectional vehicle moving in a planar environment with only two markers.
Yu {\em et al.}~\cite{Yu11} have developed a path planning algorithm based on dynamic programming for navigation of a micro aerial vehicle (MAV) using bearing-only measurements. 
In their proposed algorithm, at each time step, one of the three possible choices of roll command is selected by solving an optimization problem based on an observability criterion for the chosen path. 
Each of these approaches are applicable for a specific type of nonlinear systems; solving the problem of choosing controls to improve the observability of an arbitrary nonlinear system in general is
a challenging problem.

The fundamental step toward designing an observability-based optimal controller is to formulate the goal of control as a long-term optimization of a scalar cost function. In order to formulate this control design problem using optimal control, one must define an objective that be used to evaluate the long-term performance of running each candidate control policy, $\vec{u}(\vec{x})$ (from each initial condition $\vec{x}_0$), subject to relevant algebraic and dynamic constraints.
Several measures have been used in the literature to evaluate the observability of a system. In the case of the observability-based optimization problem, the observability Gramian has been the most popular notion used to evaluate observability. In order to perform an optimization on the observability Gramian matrix, a scalar function of the matrix must be chosen. The smallest singular value \cite{waldraff1998use,Krener09}, the determinant \cite{alaeddini2016optimal,wouwer2000approach,Serpas13}, the trace \cite{alaeddini2017adaptive,van2000selection}, and the spectral norm \cite{Singh05} of the observability Gramian are candidate measures for observability optimization. 
However, the analytical computation of the nonlinear observability Gramian is expensive and requires that the system dynamics be described by smooth functions. An alternative tool used to measure the level of observability for nonlinear systems is the empirical observability Gramian. To compute the empirical observability Gramian, one only needs to collect empirical data (either from experiments or simulation).

For nonlinear systems, optimal state feedback can be obtained from the solution of the Hamilton-Jacobi-Bellman (HJB) equation. This equation is usually hard to solve analytically. Over the last few years, with increasingly powerful computers and efficient algorithms, the popularity of numerical optimal control has considerably grown. In this direction, different approximation methods have been used to obtain suboptimal feedback control policies for quadratic \cite{Beeler00, Chen03} or non-quadratic \cite{Cimen04, Esfahani11} cost functionals. Most of these solution methods involve solving Riccati equations and a series of linear algebraic or differential equations. In this work, we use a recursive algorithm to find a solution to our optimization problem that maximizes the system observability. Along the way, we approximate the solution to the optimal control with piecewise linear controls.

In comparison with the previous observability-based path planning algorithms reported in the literature, 
the state feedback obtained in this paper can be applied to a larger family of nonlinear systems; the algorithm also ensures the closed loop stability. In particular, utilizing the concept of observability Gramian, and tools from optimal control theory, this paper develops an algorithm for suboptimal piecewise linear state feedback controller with stability guarantees. The objective function examined in this paper is quadratic in state and control, augmented with an observability measure. One of our contribution is presenting an observability index, whose optimization does not have adverse effects on the stability of the system.

\section{Mathematical Preliminaries}
\label{sec:prelim}
Consider a linear time invariant system of the form
\begin{eqnarray}
\begin{aligned}
   \dot{\vec{x}}& = A \vec{x} + B \vec{u}  \\
   \vec{y} &= C \vec{x} \,,  \label{linear}
\end{aligned} 
\end{eqnarray}
for which, the observability Gramian,
\begin{equation}
	W_{o,L} = \int_{0}^{t_f} e^{A^Tt} C^T C e^{At} \mathrm{d}t \,, \label{LinGram}
\end{equation}
can be computed to evaluate the observability of a linear system \cite{Krener09}. 
A linear system is (fully) observable if and only if the corresponding observability Gramian is full rank \cite{Muller72}. In the linear setting, if the observability Gramian is rank deficient, certain states (or directions in the state space) cannot be reconstructed from the sensor data, regardless of the control policy being applied.
%
%\subsection{Empirical Observability Gramian}
%\label{subsec:Empgram}
%
While the observability Gramian works well for determining the observability of linear systems, analytical observability conditions for nonlinear systems quickly become intractable, necessitating 
modifications for computational tractability.
One approach is evaluation of nonlinear observability Gramian through linearization
and the corresponding Jacobian matrices; however, this approach only provides
an approximation of the local observability for a specific trajectory. 
One alternative method to evaluate observability of a nonlinear system is using the relatively new concepts of the observability covariance or the empirical observability Gramian \cite{Krener09}. 
This approach provides a more accurate description of a nonlinear system's observability, while being less computationally expensive than analytical tools such as Lie algebraic based methods. 
To set the stage for the main contribution of the paper, 
let us first provide a brief overview of the empirical observability Gramian and how
it is used to evaluate the observability of nonlinear systems.

Consider the problem of system observability for nonlinear systems in control affine form,
\begin{empheq}[left=\empheqlbrace ]{align}
 &    \dot{\vec{x}}=\vec{f}_0(\vec{x})+\sum_{i=1}^{p} \vec{f}_i(\vec{x})u_i, \ \  \vec{x} \in \mathbb{R}^n, \nonumber \\
 &	 \vec{y}=\vec{h}(\vec{x}),  \ \ \vec{y} \in \mathbb{R}^m. \label{UAffine}
\end{empheq}
For such systems, $\vec{f}_0(\vec{x})$ is referred to as the drift vector field or simply drift, and $\vec{f}_i(\vec{x})$ are termed the control vector fields. 

The empirical observability Gramian for the nonlinear system \eqref{UAffine} is constructed as follows. For $\epsilon > 0$, let $\vec{x}_0^{\pm i} = \vec{x}_0 \pm \epsilon \vec{e}_i$ be the perturbations of the initial condition and $\vec{y}^{\pm i}(t)$ be the corresponding output, where $\vec{e}_i$ is the $i^{\text{th}}$ unit vector in $\mathbb{R}^n$. For the system \eqref{UAffine}, the {\em empirical observability Gramian} $W_o$, is an $n \times n$ matrix, whose $(i,j)$ entry is 
\begin{equation}
	W_{o_{ij}} =\frac{1}{4\epsilon^2} \int_{0}^{t_f} \left(\vec{y}^{+i}(t)-\vec{y}^{-i}(t) \right)^T \left(\vec{y}^{+j}(t)-\vec{y}^{-j}(t) \right) \mathrm{d}t. \label{EmpObsGram}
\end{equation}
It can be shown that if the system dynamics ($\vec{f}_0$ and $\vec{f}_i$ functions in \eqref{UAffine}) are smooth, then the empirical observability Gramian converges to the local observability Gramian as $\epsilon \to 0$. Note that the perturbation, $\epsilon$, should always be chosen such that the system stays in the region of attraction of the equilibrium point of the system. The largest singular value \cite{Singh05}, the smallest eigenvalue \cite{Krener09}, the determinant \cite{DeVries13, Serpas13}, and the trace of the inverse \cite{DeVries13} of the observability Gramian have been used as different measures for the observability.

The empirical observability Gramian \eqref{EmpObsGram} is used to introduce an index for 
measuring the degree of observability of a given nonlinear system. This index is called the \emph{observability index}; here we will use the trace of the empirical observability Gramian for the measure of the observability, i.e., 
\begin{equation}
	\trace(W) = \frac{1}{4\epsilon^2} \int_{0}^{t_f} \sum\limits_{i=1}^n \left\| \vec{h}(\vec{x}^{+i} (t))-\vec{h}(\vec{x}^{-i}(t)) \right\|^2 \mathrm{d}t\,, \label{TraceGram}
\end{equation}
where, $\vec{x}^{\pm i} (t)$ is the trajectory corresponding to the initial condition $\vec{x}_0^{\pm i} = \vec{x}_0 \pm \epsilon \vec{e}_i$.
%The observability Gramian used here is a little bit different from \eqref{EmpObsGram} and comes from the definition empirical observability Gramian presented in \cite{lall2002subspace}. However, these two definitions are very similar, and both have been used in different applications.

\section{Problem Formulation}
\label{sec:prob}

In order to preserve the asymptotic convergence of the closed loop state trajectory, we consider an augmented optimal control problem. In this augmented control problem, the control objective is twofold:
\begin{itemize}
\item Improve the observability of the system in order to avoid unobservable trajectories, and
\item Guarantee the stability of the system and convergence of the resulting state trajectory. 
\end{itemize}
In this direction, we consider the following total cost functional
\begin{equation}
\begin{aligned}
& \underset{\vec{x}, \vec{u}}{\text{min}}
& & J = \int_{0}^{t_f} \left\{ l_1(\vec{x},\vec{u})-l_2(t,\vec{x},\vec{x}^{\pm 1}, \cdots, \vec{x}^{\pm n}) \right\} \mathrm{d}t\,,
\end{aligned} \label{CostToGo}
\end{equation}
\begin{equation}
\begin{aligned}
	&l_1(\vec{x},\vec{u})=\vec{x}^T Q \vec{x}+\vec{u}^TR\vec{u} \\
	&l_2(t,\vec{x},\vec{x}^{\pm 1}, \cdots, \vec{x}^{\pm n}) = e^{-t} \text{sat}_{\zeta} \left( \frac{1} {4\epsilon^2} \sum\limits_{i=1}^n  \left\| \vec{h}(\vec{x}^{+i})-\vec{h}(\vec{x}^{-i}) \right\|^2 \right)\,, \label{ObsIdx}
\end{aligned} 
\end{equation}
where, $\text{sat}_{\zeta}(x)$ is 
\begin{equation} \text{sat}_{\zeta}(x) = \left\{ \begin{aligned}
 &    \zeta \ \  \text{if } x>\zeta \\
 &	  x \ \  \text{if }x \leq \zeta \end{aligned} \right. \label{satFunc}
\end{equation}
and $\zeta > 0$ is chosen such that the optimization problem (\ref{CostToGo}) is feasible and meaningful.

The proposed cost functional consists of two parts: 
\begin{itemize}
\item The first term, $l_1(\vec{x},\vec{u})$, takes into account the control energy and the deviation from the desired steady state trajectory; in our setup we assume that $Q$ and $R$ are both positive definite. 
\item The observability index, $l_2(t,\vec{x},\vec{x}^{\pm 1}, \cdots, \vec{x}^{\pm n})$, which is a transient term, taking into account the local observability of the system.
%, given by 
%\begin{equation}
%	l_2(t,\vec{x},\vec{x}^{\pm 1}, \cdots, \vec{x}^{\pm n}) = e^{-t} \text{sat}_{\zeta} \left( \frac{1} {4\epsilon^2} \sum\limits_{i=1}^n  \left\| \vec{h}(\vec{x}^{+i})-\vec{h}(\vec{x}^{-i}) \right\|^2 \right), \label{ObsIdx}
%\end{equation}
%where, $\text{sat}_{\zeta}(x)$ is 
%\begin{equation} \text{sat}_{\zeta}(x) = \left\{ \begin{aligned}
% &    \zeta \ \  \text{if } x>\zeta \\
% &	  x \ \  \text{if }x \leq \zeta \end{aligned} \right. \label{satFunc}
%\end{equation}
%and $\zeta > 0$ is chosen such that the optimization problem (\ref{CostToGo}) is feasible and meaningful.
\end{itemize}

In the cost function \eqref{CostToGo}, the sum of the control effort (i.e., the integral of the $\vec{u}^T R \vec{u}$ term) and the deviation from the desired trajectory (i.e., the integral of the $\vec{x}^T Q\vec{x}$ term) is minimized, while maximizing an observability index. 
The index $l_1(\vec{x},\vec{u})$ determines the asymptotic behavior of the closed loop system, while $l_2(t,\vec{x},\vec{x}^{\pm 1}, \cdots, \vec{x}^{\pm n})$ specifies the desired transient behavior of the system. This cost function does not directly maximize observability.
Instead the observability term, $l_2(t,\vec{x},\vec{x}^{\pm 1}, \cdots, \vec{x}^{\pm n})$, tunes the cost function so that the obtained optimal control makes the system more observable. We note that maximizing the index of observability and ignoring the remaining terms does not guarantee the stability of system. 
In order to have a meaningful optimization problem, it is desired to have $\vec{x}^T Q \vec{x}-l_2(\cdot) \geq 0$. The saturation limit $\zeta$ in \eqref{satFunc}, ensures that $\vec{x}^T Q \vec{x}-l_2(\cdot)$ remains a positive semi-definite function. However, the saturation limit can be set to a constant number, for example, when we have information on the rate of convergence of the system. Given a nonlinear system $\dot{\vec{x}} = \vec{f}(\vec{x},\vec{u})$, assume that there is a lower-bound, $\beta$, on the rate of decay of the nonlinear system, such that
\begin{equation}
\|\vec{x}(t)\|_Q \geq e^{-\beta t} \|\vec{x}(t_j)\|_Q\,, \ \ \forall t \in [t_j, t_{j+1})\,, \label{RateDecayBound}
\end{equation}
where $\|\cdot\|_Q$ is the norm of the vector induced by the positive definite matrix $Q$. In this case, the saturation parameter, $\zeta$, could be set to
\begin{equation}
\zeta = \left\{
\begin{aligned}
	&\|\vec{x}(t_j)\|_Q^2 & \text{if} && 0<\beta \leq \frac{1}{2},  \\
	&e^{(1-2\beta)t_{j+1}} \|\vec{x}(t_j)\|_Q^2 & \text{if} && \beta>\frac{1}{2}\,.
\end{aligned} \right. \label{saturatePars}
\end{equation}

\begin{lemma}
Consider the nonlinear system $\dot{\vec{x}} = \vec{f}(\vec{x},\vec{u})$. If condition \eqref{RateDecayBound} is satisfied, then the saturation parameter $\zeta$, given by \eqref{saturatePars}, guarantees that $l_0(\vec{x}) = \vec{x}^T Q \vec{x}-l_2(\cdot) \geq 0$.
\end{lemma}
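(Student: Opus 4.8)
The plan is to reduce the inequality $l_0(\vec{x}) \ge 0$ to two elementary scalar exponential estimates, one for each branch of \eqref{saturatePars}, after first discarding the observability data and keeping only the saturation bound.

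First I would bound the transient term from above using only \eqref{satFunc}. Since $\text{sat}_{\zeta}(x) \le \zeta$ for every argument $x$, along any trajectory
\[
 l_2(t,\vec{x},\vec{x}^{\pm 1},\cdots,\vec{x}^{\pm n}) = e^{-t}\,\text{sat}_{\zeta}\!\left(\frac{1}{4\epsilon^2}\sum_{i=1}^n \left\| \vec{h}(\vec{x}^{+i})-\vec{h}(\vec{x}^{-i}) \right\|^2\right) \le e^{-t}\zeta .
\]
Hence $l_0(\vec{x}) = \vec{x}^T Q \vec{x} - l_2(\cdot) \ge \|\vec{x}(t)\|_Q^2 - e^{-t}\zeta$, and it suffices to prove $\|\vec{x}(t)\|_Q^2 \ge e^{-t}\zeta$ for all $t \in [t_j,t_{j+1})$.

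Next I would feed in the decay hypothesis. Squaring \eqref{RateDecayBound} gives $\|\vec{x}(t)\|_Q^2 \ge e^{-2\beta t}\|\vec{x}(t_j)\|_Q^2$ on $[t_j,t_{j+1})$, so the claim follows once $e^{-2\beta t}\|\vec{x}(t_j)\|_Q^2 \ge e^{-t}\zeta$; since $\|\vec{x}(t_j)\|_Q^2 > 0$ we may cancel it and substitute the value of $\zeta$. If $0<\beta\le \tfrac12$, then $\zeta = \|\vec{x}(t_j)\|_Q^2$ and the inequality reduces to $e^{-2\beta t}\ge e^{-t}$, i.e.\ $t(1-2\beta)\ge 0$, which holds because $1-2\beta \ge 0$ and $t \ge t_j \ge 0$. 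If $\beta > \tfrac12$, then $\zeta = e^{(1-2\beta)t_{j+1}}\|\vec{x}(t_j)\|_Q^2$ and the inequality becomes $e^{-2\beta t} \ge e^{-t}\,e^{(1-2\beta)t_{j+1}}$, equivalently $(1-2\beta)t \ge (1-2\beta)t_{j+1}$; dividing by the negative number $1-2\beta$ reverses this to $t \le t_{j+1}$, which is exactly what holds on the half-open interval $[t_j,t_{j+1})$. Combining the two cases yields $l_0(\vec{x}) \ge 0$.

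The computation itself is short; the only point that needs care is the second case, where dividing by $1-2\beta<0$ flips the inequality — and this is precisely what fixes the factor $e^{(1-2\beta)t_{j+1}}$ in \eqref{saturatePars} as the largest admissible saturation level that still keeps $\vec{x}^T Q\vec{x} - l_2(\cdot)$ nonnegative over the whole interval. I would also note that the estimate is applied interval by interval, with $\zeta$ re-evaluated at each breakpoint $t_j$, which is consistent with the piecewise-linear feedback construction used in the sequel.
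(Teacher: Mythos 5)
Your proof is correct and follows essentially the same route as the paper's: bound $\text{sat}_\zeta(\cdot)\le\zeta$, invoke the decay bound \eqref{RateDecayBound}, and verify $\zeta\le e^{t}\|\vec{x}(t)\|_Q^2$ from \eqref{saturatePars}. The only cosmetic difference is that you carry out the two cases of \eqref{saturatePars} explicitly, whereas the paper compresses them into the single inequality $\zeta\le e^{(1-2\beta)t}\|\vec{x}(t_j)\|_Q^2$.
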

\begin{proof}
From \eqref{saturatePars} we have
\begin{equation}
\zeta \leq e^{(1-2\beta)t} \|\vec{x}(t_j)\|_Q^2, \ \ \forall t \in [t_j, t_{j+1})\,.
\end{equation}
On the other hand, using \eqref{RateDecayBound}, one can conclude that
\begin{equation}
\zeta \leq e^t \|\vec{x}(t)\|_Q^2\,.
\end{equation}
From the definition of the saturation function~\eqref{satFunc}, we have $\text{sat}_\zeta(\cdot) \leq \zeta$. 
Hence, 
\begin{equation}
\begin{aligned}
&l_2(\cdot) = e^{-t} \text{sat}_\zeta(\cdot) \leq \|\vec{x}(t)\|_Q^2, 
\end{aligned}
\end{equation}
and $\vec{x}^T Q \vec{x} - l_2(\cdot) \geq 0$, for all $t \in [t_j, t_{j+1})$.
\end{proof}

The problem considered here is a nonlinear optimal control with a non-quadratic cost function \eqref{CostToGo}.
This problem is generally solved by solving a partial differential equation (Hamilton-Jacobi-Bellman (HJB) equation). 
In general, HJB does not have an analytical solution and potentially difficult to solve computationally.
In order to simplify the derivation of the control input, 
the original problem is divided into a sequence of subproblems of finding an optimal control in the state feedback form, namely $\vec{u}=K \vec{x}$, over shorter time intervals; as such,
the resulting control input can be viewed as a sub-optimal solution to the original problem. 
This approach motivates designing a {\em piece-wise linear state feedback} for the augmented optimal control
problem. Let $0 = t_0 < t_1 < \cdots < t_j < t_{j+1} < \cdots < t_f$. 
Then the optimal linear state feedback control $\vec{u}_j$ for the time interval $t \in [t_j\ \ t_{j+1}), \ \ j=0,\cdots, f-1$ is given by
\begin{equation}
\begin{aligned}
\vec{u}_j^* = & \underset{\vec{x}, \vec{u}}{\text{argmin}}
& & \int_{t_j}^{t_{j+1}} \left\{ l_1(\vec{x},\vec{u})-l_2(t,\vec{x},\vec{x}^{\pm 1}, \cdots, \vec{x}^{\pm n}) \right\} \mathrm{d}t + L(\vec{x}(t_{j+1}),t_{j+1}) \\
& \text{subject to}
& & \vec{u} =K_j \vec{x},\ \ t \in [t_j\ \ t_{j+1}) \,,
\end{aligned} \label{CostTran}
\end{equation}
where, $L(\vec{x}(t_{j+1}),t_{j+1})$ is the cost-to-go at the final time of the interval $[t_j\ \ t_{j+1} )$. Without loss of generality, consider a time interval $[t_j \ \ t_{j+1} )$. By the substitution of $\vec{u}=K_j \vec{x}$ into \eqref{CostTran}, the cost functional can then be approximated as,
\begin{equation}
   J(K_j)  = \int_{t_j}^{t_{j+1}} \Gamma(t,\vec{x},\vec{x}^{\pm 1}, \cdots, \vec{x}^{\pm n},K_j) \mathrm{d}t + L(\vec{x}(t_{j+1}),t_{j+1})\,, \label{Cost}
\end{equation}
where
\begin{equation}
\Gamma(t,\vec{x},\vec{x}^{\pm 1}, \cdots, \vec{x}^{\pm n},K_j) = \vec{x}^T \left( K_j^TR K_j + Q \right) \vec{x}-l_2(t,\vec{x},\vec{x}^{\pm 1}, \cdots, \vec{x}^{\pm n}) \,.
\end{equation}
Now, considering the dynamics of the nonlinear system, the optimization problem can be expressed as
\begin{equation} 
\begin{aligned}
& \underset{K_j}{\text{minimize}}
& & J(K_j) = \int_{t_j}^{t_{j+1}} \Gamma(t,\vec{x},\vec{x}^{\pm 1}, \cdots, \vec{x}^{\pm n},K_j) \mathrm{d}t+ L(\vec{x}(t_{j+1}),t_{j+1}) \\
& \text{subject to}
& & \dot{\vec{x}}=\vec{f}(\vec{x},K_j), \ \ \vec{x}(t_j) = \vec{x}_j,\\
& 
& & \dot{\vec{x}}^{\pm k}=\vec{f}(\vec{x}^{\pm k},K_j), \ \ \vec{x}^{\pm k}(t_j) = \vec{x}_j \pm \epsilon \vec{e}_k\,, \ \ k=0,\cdots,n\,.
\end{aligned} \label{ModCost}
\end{equation}
The optimal gain $K_j^*$ need to be found to minimize this cost-to-go $J$. 

%%%%%%%%%%%%%%%%%%%%%%%%%%%%%%%%%%%%%%%%%%%%%%%%%%%%%%%%%%%%%%%%%%%%%%%%%%%
\section{Optimization Solution}
\label{sec:optimal}
%\subsection{Optimization Algorithm}

In this section, an algorithm is presented to solve \eqref{ModCost} to determine the control policy for the optimal control objective that has an embedded observability index. Here, a recursive gradient-based algorithm is devised to find a solution to this optimization problem. 

Given \eqref{Cost}, we first define a new variable, $x_{n+1}$,
\begin{equation}
    x_{n+1}(t)= \int_{t_j}^{t} \Gamma(\tau,\vec{x}(\tau),\vec{x}^{\pm 1}(\tau), \cdots, \vec{x}^{\pm n}(\tau),K_j) d\tau+ L(\vec{x}(t),t).  \label{NewVar}
\end{equation}
Assuming $\vec{x}(t_j)=\vec{x}_j$ is given, it is clear that $x_{n+1}(t_j) = L(\vec{x}_j,t_j)$ and $x_{n+1}(t_{j+1}) = J(K_j)$. 
Next, define an augmented state vector as 
$$\vec{\bar{x}}(t)= \begin{bmatrix} \vec{x}(t) \\ \vec{x}^{\pm 1}(t) \\ \vdots \\ \vec{x}^{\pm n}(t) \\ x_{n+1}(t) \end{bmatrix}\,.$$ 
Then,
\begin{equation}
\dot{\bar{\vec{x}}} = \begin{bmatrix} \dot{\vec{x}} \\ \dot{\vec{x}}^{\pm 1} \\ \vdots \\ \dot{\vec{x}}^{\pm n} \\ \dot{x}_{n+1} \end{bmatrix} = \mathbb{H}(t,\vec{\bar{x}},K_j)\,; \ \ \vec{\bar{x}}(t_j) = \begin{bmatrix} \vec{x}_j \\ \vec{x}_j\pm \epsilon \vec{e}_1 \\ \vdots \\ \vec{x}_j\pm \epsilon \vec{e}_n \\  L(\vec{x}_j,t_j) \end{bmatrix}, \label{AugStateEq}
\end{equation}
where,
\begin{equation}
	\mathbb{H}(t,\vec{\bar{x}},K_j) = \begin{bmatrix} \vec{f}(\vec{x},K_j) \\ \vec{f}(\vec{x}^{\pm 1},K_j) \\ \vdots \\ \vec{f}(\vec{x}^{\pm n},K_j) \\ \Gamma(t,\vec{x}, \vec{x}^{\pm 1}, \cdots, \vec{x}^{\pm n},K_j)+\dot{L}(\vec{x},t) \end{bmatrix}\,,
\end{equation}
and $\vec{f}(\vec{x},K_j)$ is given in \eqref{UAffine} after substituting $\vec{u}=K_j \vec{x}$. The optimization problem can now be formulated as:
\begin{equation}
\begin{aligned}
& \underset{K_j}{\text{minimize}}
& & J = x_{n+1}(t_{j+1}) \\
& \text{subject to}
& & \dot{\bar{\vec{x}}} = \mathbb{H}(t,\vec{\bar{x}},K_j).
\end{aligned}
\end{equation}

In order to solve this optimization problem, the gradient vector, $\displaystyle \frac{\partial J}{\partial K_j} = \left. \frac{\partial x_{n+1}}{\partial K_j} \right |_{t=t_{j+1}}$, is required. In \cite{Esfahani11}, a method is demonstrated for solving a similar problem setup. By defining $\displaystyle \bar{X}_{K_j} = \frac{\partial \vec{\bar{x}}}{\partial K_j}$, and applying the chain rule, we have
\begin{eqnarray}
&&    \dot{\bar{X}}_{K_j}= \frac{\partial \mathbb{H}}{\partial \vec{\bar{x}}} \frac{\partial \vec{\bar{x}}}{\partial K_j} +\frac{\partial \mathbb{H}}{\partial K_j} =  \frac{\partial \mathbb{H}}{\partial \vec{\bar{x}}} \bar{X}_{K_j} +\frac{\partial \mathbb{H}}{\partial K_j},  \nonumber \\
&&     \bar{X}_{K_j}(t_j) = \vec{0}.\label{XK_0_dot}
\end{eqnarray}
Notice that if the system is multi-input, then $K_j$ is a matrix and the term $\displaystyle \frac{\partial \vec{\bar{x}}}{\partial K_j}$ is the derivative of a vector with respect to a matrix, which results in a higher order tensor. 
In order to address multi-input as well as single-input control problems using more conventional linear algebra, we need to have separate equations for $\displaystyle \bar{X}_{\vec{k}_{ij}} = \frac{\partial \vec{\bar{x}}}{\partial \vec{k}_{ij}}, i=1 \hdots p$, where $\vec{k}_{ij}$ is the $i^{th}$ row of the matrix $K_j$.

Now \eqref{AugStateEq} and \eqref{XK_0_dot} should be solved together for $t_j<t<t_{j+1}$. The last row of $\bar{X}_{K_j}$ at $t=t_{j+1}$ is the gradient vector which can be used for improving the optimal value of $K_j$. Based on these observations, an iterative algorithm for obtaining the optimal feedback gain, $K_j^*$, is presented in \cref{GDecentAlg}. Note that we are using an estimate of the Hessian matrix to make sure that the optimal point is a minima of the objective (and not a saddle point). One can use a more precise method (e.g.\, a finite difference approximation for estimating the Jacobian presented in \cite{spall2000adaptive}). Since we are not directly using the Hessian in the computation of the optimal point, we are using a simple method to estimate the curvature of the function at each iterative step.
\begin{algorithm}[!h]
Choose step size $\mu_0 > 0$\;
Choose $K_j^0$, an initial value for $K_j$ that stabilizes the system\;
cvxCheck : = TRUE\;
 Choose a convergence tolerance $\epsilon_t$\;
 \Repeat{$\|\vec{g}_i\| \leq \epsilon_t$ AND cvxCheck is TRUE}{
  Solve \eqref{AugStateEq} and \eqref{XK_0_dot} together for $t_j<t<t_{j+1}$\;
  $\vec{g}_i$ := the last row of $\bar{X}_{K_j}$ at $t=t_{j+1}$\;
  $K_j^{i+1} = K_j^i - \mu_i \vec{g}_i$\;
  Calculate $H$ such that $H[mn] = \frac{\vec{g}_i[m]-\vec{g}_{i-1}[m]}{K_j^{i+1}[n]-K_j^i[n]}$\;
  \eIf{$H \succeq 0$}{
  cvxCheck := TRUE \;
  }{
  cvxCheck := FALSE \;
  }  
  \eIf{cvxCheck is TRUE}{
  Update step size $\mu_i$\;
  }{
  Do not change step size\;
  }
 }
 \caption{Iterative algorithm to find $K_j^*$} \label{GDecentAlg}
\end{algorithm}

The iteration terminates when the gradient vector becomes small enough or remains almost unchanged. 
The algorithm also checks the convexity of the objective at the solution obtained in order to ensure that it corresponds to a local minimum. The value of $K_j$ is updated based upon the gradient descent rule. The step size $\mu_i$ is a positive number, chosen small enough such that system stability is guaranteed.

%%%%%%%%%%%%%%%%%%%%%%%%%%%%%%%%%%%%%%%%%%%%%%%%%%%%%%%%%%%%%%%%%%%%%%%%%%%
\subsection{Convergence of the Algorithm}
\label{sec:conv}

The gradient method is a popular first order method due to its simplicity. 
The algorithm requires the computation of the gradient, but not the second derivatives. One of the parameters that must be chosen in this method is the step size. 

\begin{proposition}
Any direction that makes an angle of strictly less than $\frac{\pi}{2}$ with $-\nabla J$ is guaranteed to produce a decrease in $J$ provided that the step size is sufficiently small \cite{Nocedal99}.
\end{proposition}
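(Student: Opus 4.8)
The plan is to reduce the statement to the elementary first‑order descent lemma. Let $\vec{d}$ denote the proposed search direction at the current iterate $K$, and let $\theta$ be the angle it makes with $-\nabla J(K)$. First I would translate the angle condition into an inner‑product inequality: since $\cos\theta = \frac{-\nabla J(K)^T \vec{d}}{\|\nabla J(K)\|\,\|\vec{d}\|}$ and $\theta < \frac{\pi}{2}$ forces $\cos\theta > 0$, we obtain $\nabla J(K)^T \vec{d} < 0$, assuming $\nabla J(K) \neq \vec{0}$ and $\vec{d}\neq\vec{0}$ (i.e.\ we are not already at a stationary point, in which case there is nothing to prove). Thus $\vec{d}$ is a descent direction in the classical sense, and the task is to show that a small enough step along it strictly decreases $J$.

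Next I would introduce the scalar function $\phi(\mu) = J(K + \mu \vec{d})$ for $\mu \geq 0$ and differentiate along the ray, obtaining $\phi'(0) = \nabla J(K)^T \vec{d} < 0$. Because $J$ is built from the smooth data of \eqref{UAffine} together with the (piecewise) smooth cost integrand, $J$ is continuously differentiable in the gain, so $\phi \in C^1$ near $\mu = 0$ and $\phi'$ is continuous there. Consequently there exists a threshold $\bar\mu > 0$ such that $\phi'(\mu) < 0$ for all $\mu \in [0, \bar\mu]$. Applying the mean value theorem on $[0,\mu]$ for any $\mu \in (0, \bar\mu]$ then gives $\phi(\mu) - \phi(0) = \mu\,\phi'(\xi) < 0$ for some $\xi \in (0,\mu)$, i.e.\ $J(K + \mu\vec{d}) < J(K)$, which is exactly the claimed decrease. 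Equivalently, one may quote the first‑order expansion $J(K + \mu\vec{d}) = J(K) + \mu\,\nabla J(K)^T\vec{d} + o(\mu)$ and observe that the linear term, being strictly negative, dominates the remainder for $\mu$ sufficiently small.

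The only real subtlety here is not an obstacle but a hypothesis worth recording: the argument needs $J$ to be $C^1$ in the gain $K_j$, which is precisely why the paper assumes the vector fields $\vec{f}_i$ and the output map $\vec{h}$ are smooth; with mere continuity of $J$ the conclusion can fail. I would therefore state this regularity assumption explicitly, give the short self‑contained argument above, and cite \cite{Nocedal99} for the standard reference, keeping the write‑up brief since this is a textbook fact used here only to justify the gradient‑descent step in \cref{GDecentAlg}.
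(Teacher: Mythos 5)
Your proof is correct and is the standard textbook argument: the angle condition gives $\nabla J(K)^T\vec{d}<0$, and the first-order expansion (or mean value theorem) along the ray then yields a strict decrease for sufficiently small $\mu$. The paper itself supplies no proof for this proposition, deferring entirely to \cite{Nocedal99}, so your write-up simply fills in the argument that the cited reference contains; your explicit remarks about excluding the stationary case and requiring $J$ to be $C^1$ in the gain are appropriate and worth keeping.
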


%\begin{proof}
%Suppose that the cost function $J$, given in \eqref{Cost}, is continuously differentiable. Then Taylor's theorem can be used to prove this proposition. The details of this proof can be found in Wright and Nocedal \cite{Nocedal99}.
%\end{proof}

The suggested rule for the step size selection here is 
\begin{equation}
	\mu_i = \frac{\mu_0}{i},
\end{equation}
where, $0 <  \mu_0 < \infty$ is constant, and $i$ is the iteration counter. This step size policy, which is called \emph{square summable but not summable}, satisfies
\begin{equation}
	\begin{aligned}
		\sum_{i=1}^{\infty} \mu_i &= \sum_{i=1}^{\infty} \frac{\mu_0}{i} = \mu_{0} \sum_{i=1}^{\infty} \frac{1}{i} = \infty \\
		\sum_{i=1}^{\infty} \mu_i^2 &= \sum_{i=1}^{\infty} \frac{\mu_0^2}{i^2} = \mu_{0}^2 \sum_{i=1}^{\infty} \frac{1}{i^2} < \infty.
	\end{aligned} \label{SSNS}
\end{equation}

It is worth noting that because of the transient term, $l_2(\cdot)$, the cost functional $J(K)$ is not convex. But since the transient term is bounded, the cost functional is bounded by two convex functions. The lower and upper bounds for the cost functional is given by
\begin{equation}
	\begin{aligned}
		\underline{J} &= \int_{t_j}^{t_{j+1}} \left\{ l_1(\vec{x},\vec{u})-\zeta \right\} \mathrm{d}t + L(\vec{x}(t_{j+1}),t_{j+1})\,, \\
		\bar{J} &= \int_{t_j}^{t_{j+1}} l_1(\vec{x},\vec{u}) \mathrm{d}t + L(\vec{x}(t_{j+1}),t_{j+1})\, .
	\end{aligned} \label{JupJdown}
\end{equation}
Both of the lower and upper bounds ($\underline{J}$ and $\bar{J}$) are convex functions. Therefore, the cost function cannot be concave over the entire search space and there should exist regions over which the cost function is locally convex. We assume here that after a finite number of iterations, the algorithm drives the system to a locally convex point and after that, it remains in a convex region around that point until the algorithm converges to a local minimum.
%\begin{figure}[!h]
%   \begin{center}
%  	 {\includegraphics[width=0.4\textwidth]{UpDown2.pdf}}
%%	 \vspace{-6 mm}
%   \end{center}
%   \caption{Illustrative figure of the upper and lower bounds of the cost function.}
%   \label{UpDown}
%\end{figure}

\begin{theorem}
If $J(K)$ is Lipschitz continuous, then Algorithm \ref{GDecentAlg} converges to $K^*$, where $K^*$ is a local minimum of $J$.
\end{theorem}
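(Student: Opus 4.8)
The plan is to reproduce, for the specific update $K_j^{i+1} = K_j^i - \mu_i\vec{g}_i$ of \cref{GDecentAlg}, the classical convergence argument for gradient descent with a \emph{square summable but not summable} step size, and then to localize it using the standing assumption (stated just before the theorem) that after finitely many iterations the iterates enter and remain in a region on which $J$ is locally convex. Here $\vec{g}_i$ is the gradient $\partial J/\partial K_j$ evaluated at the current iterate, i.e.\ the last row of $\bar{X}_{K_j}$ at $t=t_{j+1}$, as noted after \eqref{XK_0_dot}. I would read the Lipschitz hypothesis in the form that $\nabla J$ is Lipschitz with some constant $L$ on a bounded set containing the iterates (which is what the descent estimate actually needs), so that the descent lemma yields
\[
J(K_j^{i+1}) \;\le\; J(K_j^i) - \mu_i\|\vec{g}_i\|^2 + \tfrac{L}{2}\,\mu_i^2\|\vec{g}_i\|^2 .
\]
Since $\mu_i = \mu_0/i \to 0$, there is an $i_0$ with $\mu_i \le 1/L$ for all $i \ge i_0$, whence $J(K_j^{i+1}) \le J(K_j^i) - \tfrac12\mu_i\|\vec{g}_i\|^2$, so $J$ is nonincreasing along the tail of the iteration.

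Next I would combine this with boundedness from below. By \eqref{JupJdown} we have $J \ge \underline{J}$, and $\underline{J}$ is a coercive convex function (since $Q,R\succ 0$), hence bounded below; thus $\{J(K_j^i)\}$ is nonincreasing and bounded below, therefore convergent. Telescoping $\tfrac12\mu_i\|\vec{g}_i\|^2 \le J(K_j^i) - J(K_j^{i+1})$ from $i_0$ to infinity gives $\sum_i \mu_i\|\vec{g}_i\|^2 < \infty$, and together with $\sum_i\mu_i = \infty$ from \eqref{SSNS} this forces $\liminf_i\|\vec{g}_i\| = 0$. To upgrade $\liminf$ to $\lim$, I would run the usual band-crossing argument: if $\limsup_i\|\vec{g}_i\| = \delta > 0$, then $\|\vec{g}_i\|$ infinitely often passes from below $\delta/3$ to above $2\delta/3$; over each such block the iterate moves a total distance of order $\delta$, each increment being $\mu_i\|\vec{g}_i\|$, and $L$-Lipschitzness of $\nabla J$ then bounds $\sum_{i\in\text{block}}\mu_i\|\vec{g}_i\|^2$ below by a fixed positive constant, contradicting $\sum_i\mu_i\|\vec{g}_i\|^2 < \infty$. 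Hence $\|\vec{g}_i\|\to 0$, so every limit point of $\{K_j^i\}$ is a stationary point of $J$.

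It remains to identify which stationary point. By the standing assumption the tail of the iteration lies in a neighborhood on which $J$ is locally convex; on such a neighborhood $J$ has a unique stationary point, the local minimizer $K^*$. Since $J(K_j^i)$ is nonincreasing, the iterates stay in a bounded sublevel set, so the set of limit points is nonempty; being contained in $\{K^*\}$, it equals $\{K^*\}$, and therefore $K_j^i \to K^*$. The $H \succeq 0$ test in \cref{GDecentAlg} is precisely the certificate that the limit reached is a (local) minimum rather than a saddle point. I expect the main obstacle to be the two upgrade steps: first, passing from $\liminf\|\vec{g}_i\|=0$ to $\lim\|\vec{g}_i\|=0$, which needs the Lipschitz/band-crossing argument above and also care with the fact that \cref{GDecentAlg} shrinks $\mu_i$ only while cvxCheck is TRUE, so one must first show that cvxCheck is eventually always TRUE on the convex tail, restoring the $\mu_i=\mu_0/i$ schedule to which \eqref{SSNS} applies; and second, passing from $\|\vec{g}_i\|\to 0$ to convergence of the iterates themselves, which genuinely rests on the local-convexity hypothesis, since gradient-to-zero alone would permit the iterates to wander among distinct stationary points or drift to infinity.
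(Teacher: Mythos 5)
Your proof is essentially correct under its own assumptions, but it follows a genuinely different route from the paper's. The paper runs the classical \emph{subgradient-method} analysis of Boyd: it uses Lipschitz continuity of $J$ itself only to bound the gradient norms, $\|\vec{g}_i\|_2 \leq G$, then tracks the squared distance $\|K^i - K^*\|_2^2$, invokes the first-order convexity inequality $J(K^*) \geq J(K^i) + \vec{g}_i(K^* - K^i)$ on the locally convex neighborhood $U$, telescopes, and uses the square-summable-but-not-summable conditions \eqref{SSNS} to conclude $\min_{j \le i} J(K^j) \to J^*$. You instead reinterpret the hypothesis as Lipschitz continuity of $\nabla J$ and run the descent-lemma argument: monotone decrease of $J$ once $\mu_i \le 1/L$, boundedness below via $\underline{J}$, summability of $\mu_i\|\vec{g}_i\|^2$, a band-crossing argument to get $\|\vec{g}_i\| \to 0$, and finally local convexity to pin down the limit. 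The trade-offs are real. The paper's route honors the stated hypothesis (Lipschitz $J$, not Lipschitz $\nabla J$ --- neither condition implies the other, so your version quietly proves a different theorem), but it only establishes convergence of the running-best \emph{function value} to $J^*$; the jump from there to ``the algorithm converges to $K^*$'' (convergence of the iterates, which is what the theorem asserts) is left implicit. Your route costs a stronger smoothness assumption and more machinery (descent lemma plus band-crossing), but it actually delivers iterate convergence, and it makes explicit a point the paper glosses over: that the step-size schedule $\mu_i = \mu_0/i$ is only applied while cvxCheck is TRUE, so one must argue the schedule is eventually restored on the convex tail before \eqref{SSNS} can be used. Both proofs lean equally on the unproven standing assumption that the iterates eventually enter and remain in a convex neighborhood of a local minimum, so that shared gap is not a point of difference.
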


\begin{proof}
This proof is similar to the convergence proof for the gradient descent algorithm for a convex function \cite{boyd2003subgradient}. Since $J(K)$ satisfies the Lipschitz condition, the norm of the gradient vectors, $\vec{g}_i$, are bounded, i.e., there is a $G$ such that $\| \vec{g}_i\|_2 \leq G$ for all $i$, and
\begin{equation}
	|J(K_{i_1})-J(K_{i_2})| \leq G \| K_{i_1}-K_{i_2} \|_2,
\end{equation}
for all $i_1$ and $i_2$. Since $K^*$ is a local minimum of $J$, so there is a neighborhood $U$ containing $K^*$ such that the function is convex on this neighborhood. It is assumed here that $K^j \in U, j=0, ..., i+1$, so we have
\begin{equation}
	\begin{aligned}
	\| K^{i+1}-K^* \|_2^2 = \| K^i - \mu_i \vec{g}_i-K^* \|_2^2 &= \| K^i-K^* \|_2^2 -2 \mu_i \vec{g}_i (K^i-K^*) +\mu_i^2 \| \vec{g}_i \|_2^2  \\
	&\leq \| K^i-K^* \|_2^2 -2 \mu_i \left( J(K^i)-J^* \right) +\mu_i^2 \| \vec{g}_i \|_2^2,
	\end{aligned}
\end{equation}
where, $J^* = J(K^*)$. The inequality above comes from the property of the gradient vector of a convex function which is
\begin{equation}
	J(K^*) \geq J(K^i)+\vec{g}_i (K^*-K^i).
\end{equation}
Applying the inequality above recursively, we have
\begin{equation}
%	\begin{aligned}
	\| K^{i+1}-K^* \|_2^2 \leq \| K^0-K^* \|_2^2 -2 \sum_{j=0}^i \mu_j \left( J(K^j)-J^* \right) + \sum_{j=0}^i \mu_j^2 \| \vec{g}_j \|_2^2,
%	\end{aligned}
\end{equation}
then
\begin{equation}
	2 \sum_{j=0}^i \mu_j \left( J(K^j)-J^* \right) \leq \| K^0-K^* \|_2^2 + \sum_{j=0}^i \mu_j^2 \| \vec{g}_j \|_2^2.
\end{equation}
If we define $J_{sol}^{(i)} = \underset{j=0, \hdots, i}{\min} J(K^j)$ as the solution of the algorithm after $i$ iterations, then we have
\begin{equation}
	J_{sol}^{(i)}-J^* \leq \frac{\| K^0-K^* \|_2^2 + \sum_{j=0}^i \mu_j^2 \| \vec{g}_j \|_2^2}{2 \sum_{j=0}^i \mu_j}.
\end{equation}
Finally, using the assumption $\| \vec{g}_i\|_2 \leq G$, we obtain the inequality
\begin{equation}
	J_{sol}^{(i)}-J^* \leq \frac{\| K^0-K^* \|_2^2 +G^2 \sum_{j=0}^i \mu_j^2}{2 \sum_{j=0}^i \mu_j}.
\end{equation}
If $K^i \in U$ as $i\to\infty$, then by applying the conditions given in \eqref{SSNS}
\begin{equation}
	\lim_{i\to\infty} J_{sol}^{(i)}-J^* \leq \lim_{i\to\infty} \frac{\| K^0-K^* \|_2^2 +G^2 \sum_{j=0}^i \mu_j^2}{2 \sum_{j=0}^i \mu_j} = 0.
\end{equation}
Therefore, the gradient method proposed above converges to the local minimum $K^*$. Otherwise, if at an iteration, say $k < \infty$, $K^k \notin U$, then there is always a finite number of iterations, say $k'$, such that $K^{k+k'} \in U'$, where $U'$ is also a region over which the function is convex, and it contains a local minimum. We can use the same proof given above for this local minimum. Hence, the algorithm converges to a local minimum. 
\end{proof}

The discussion above ensures that the proposed algorithm is convergent. 
Other methods, such as Newton's method, can be used to improve the rate of convergence. But these methods require the computation of the Hessian matrix $\nabla^2 J$ or its approximation. 
Explicit computation of the Hessian matrix is often an expensive process. 
In this direction, approaches such as the quasi-Newton method can be used to estimate the Hessian matrix. 
Any of these higher order methods can be used to achieve higher rate of convergence 
at the cost of increased computational complexity. 
Improving the rate of convergence of first order methods is not the main focus in this paper;
the reader is referred to~\cite{Nocedal99} for such discussions.
%%%%%%%%%%%%%%%%%%%%%%%%%%%%%%%%%%%%%%%%%%%%%%%%%%%%%%%%%%%%%%%%%%%%%%%%%%%
\section{Closed Loop Stability}
\label{sec:stab}

In this section, the stability of the system in the sense of Lyapunov is investigated. 
The idea of consecutive design of stabilizing control by Lyapunov function is used to prove the stability of the system; this approach is inspired by LQR-Trees presented in \cite{tedrake2010lqr}.

\begin{theorem}
Given the cost function \eqref{Cost}, if the transient term is upper bounded by an integrable function $b(t)$:
\begin{equation}
\begin{aligned}
&	l_2(t,\vec{x}) \leq b(t), \ \ \forall t \geq 0, \vec{x} \in \mathbb{R}^n, \vec{u} \in \mathbb{R}^p \\
&	\mathcal{L}(t) = \int_{t}^{\infty} b(\tau) \mathrm{d}\tau < \infty, \ \ \forall t \geq 0\,,
\end{aligned} \label{assumptionsl2}
\end{equation}
and the terminal cost function is positive semi-definite, continuously differentiable, and decreases faster than the value of the function $l_1(\vec{x},\vec{u})$ along all admissible control inputs, $\vec{u}$, and the corresponding trajectories, $\vec{x}$, then the optimal control $\vec{u} = K^* \vec{x}$ will cause the trajectory of the closed loop system to converge to the origin. 
\end{theorem}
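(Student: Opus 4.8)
The plan is to run a Lyapunov-type argument in the spirit of the LQR-Tree construction of \cite{tedrake2010lqr}, taking as the certificate the optimal cost-to-go of the piecewise-linear problem, shifted by the tail integral $\mathcal{L}(t)=\int_{t}^{\infty}b(\tau)\,\mathrm{d}\tau$ from \eqref{assumptionsl2}. Let $V(\vec{x},t)$ denote the optimal value of the running-plus-terminal cost in \eqref{CostTran}--\eqref{Cost} accrued from $(\vec{x},t)$ onward under the piecewise-linear feedback, with $L(\cdot,t_{j+1})$ identified with $V(\cdot,t_{j+1})$ in the recursion. First I would establish the two-sided bound
\begin{equation}
-\,\mathcal{L}(t)\ \le\ V(\vec{x},t)\ \le\ L(\vec{x},t)\qquad\text{for all }(\vec{x},t).
\end{equation}
The lower bound holds because $l_1\ge0$ (as $Q,R\succ0$), the terminal cost is positive semi-definite, and $0\le l_2\le b$, so any admissible trajectory from $(\vec{x},t)$ contributes at least $-\int_{t}^{\infty}b\,\mathrm{d}\tau=-\mathcal{L}(t)$; the upper bound holds because for \emph{any} admissible control the accrued cost is $\int_{t}^{T}(l_1-l_2)\,\mathrm{d}\tau+L(\vec{x}(T),T)\le\int_{t}^{T}(-\dot{L})\,\mathrm{d}\tau+L(\vec{x}(T),T)=L(\vec{x},t)$, using $l_2\ge0$ and the hypothesis $\dot{L}\le -l_1$. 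Hence $\Phi(\vec{x},t):=V(\vec{x},t)+\mathcal{L}(t)$ is nonnegative and finite.

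Next I would show that $\Phi$ is monotone along the closed loop. On $[t_j,t_{j+1})$ the gain $K_j^{*}$ is fixed, so for $t\in[t_j,t_{j+1}]$ the closed-loop value is $V(\vec{x}(t),t)=\int_{t}^{t_{j+1}}\!\big(l_1-l_2\big)\,\mathrm{d}\tau+V(\vec{x}(t_{j+1}),t_{j+1})$, whence $\tfrac{\mathrm{d}}{\mathrm{d}t}V(\vec{x}(t),t)=-\big(l_1-l_2\big)$. Since $\dot{\mathcal{L}}(t)=-b(t)$,
\begin{equation}
\frac{\mathrm{d}}{\mathrm{d}t}\,\Phi(\vec{x}(t),t)=-\big(l_1(\vec{x},\vec{u})-l_2(\cdot)+b(t)\big)\ \le\ -\,l_1(\vec{x},\vec{u})\ =\ -\big(\vec{x}^T Q\vec{x}+\vec{u}^T R\vec{u}\big)\ \le\ 0,
\end{equation}
using $l_2\le b$; by continuity of $V$ and $\mathcal{L}$ across the breakpoints, $t\mapsto\Phi(\vec{x}(t),t)$ is non-increasing on $[0,\infty)$, and being bounded below by $0$ it converges to some $\Phi_{\infty}\ge0$. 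Integrating the displayed inequality over $[0,\infty)$ then gives $\int_{0}^{\infty}\big(\vec{x}^T Q\vec{x}+\vec{u}^T R\vec{u}\big)\,\mathrm{d}t\le\Phi(\vec{x}_0,0)-\Phi_{\infty}<\infty$; in particular $\int_{0}^{\infty}\vec{x}(t)^T Q\vec{x}(t)\,\mathrm{d}t<\infty$. Under the standing assumption that the trajectory (and its $\epsilon$-perturbations) remains in the region of attraction, $\vec{x}$ is bounded, hence $\vec{u}=K^{*}\vec{x}$ is bounded, hence $\dot{\vec{x}}=\vec{f}(\vec{x},\vec{u})$ is bounded, so $t\mapsto\vec{x}(t)^T Q\vec{x}(t)$ is uniformly continuous; Barbalat's lemma forces $\vec{x}(t)^T Q\vec{x}(t)\to0$, and $Q\succ0$ gives $\vec{x}(t)\to\vec{0}$.

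The step I expect to be the main obstacle is making the dynamic-programming bookkeeping rigorous: one must check that the cost-to-go identity for $V$, and thus the sign of $\tfrac{\mathrm{d}}{\mathrm{d}t}\Phi$, survives (i) the sub-optimal restriction to $\vec{u}=K_j\vec{x}$ and (ii) the approximation used in passing from \eqref{CostTran} to \eqref{Cost} --- restricting the feasible set only raises each per-interval minimum while the realized interval cost is attained exactly, so I expect the argument to go through, but this deserves care. It is precisely the sign-indefinite, non-quadratic transient term $-l_2$ inside $V$ that forces the shift by $\mathcal{L}(t)$ rather than using $V$ (or $L$) alone. A secondary difficulty is supplying the boundedness needed for Barbalat without over-strengthening the hypotheses: one either leans on the region-of-attraction assumption already in force, or adds a mild coercivity (radial unboundedness) requirement on $L$ so that $\Phi(\vec{x}(t),t)\le\Phi(\vec{x}_0,0)$ together with $V\le L$ confines the trajectory directly. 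Finally, if a genuine finite $t_f$ is retained, ``convergence to the origin'' should be understood for the receding-horizon extension in which $L$ keeps serving as a valid terminal certificate beyond $t_f$.
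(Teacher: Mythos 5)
Your proposal is correct and follows essentially the same route as the paper: the certificate is the (windowed) cost-to-go shifted by the tail integral $\mathcal{L}(t)$, nonnegativity comes from $l_1\ge 0$, $l_2\le b$ and positive semi-definiteness of $L$, and the monotone decrease comes from exactly the two hypotheses $l_2\le b$ and $\dot{L}\le -l_1$. The one substantive addition is your closing Barbalat's-lemma step, which actually converts the monotone decrease of the certificate into $\vec{x}(t)\to\vec{0}$ --- a conclusion the paper's proof asserts but does not derive --- so your version is, if anything, the more complete of the two.
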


\begin{proof}

The admissible function for $l_2(\cdot)$, introduced in \eqref{ObsIdx}, satisfies the conditions given in \eqref{assumptionsl2}. The condition on the terminal cost, $L(t,\vec{x})$ is given as:
\begin{equation}
	\dot{L}(t,\vec{x}) = L_t+L_{\vec{x}} \vec{f}(\vec{x},\vec{u}) \leq -l_1(\vec{x},\vec{u}). \label{assumptionsL}
\end{equation}
Suppose $\vec{u}^*$ and $\vec{x}^*$ are the optimal control and its corresponding trajectory obtained from optimization problem \eqref{CostTran} for $t \in [t_j \ \ t_{j+1})$. To show the stability of this control, a Lyapunov function is defined as:
\begin{equation}
%\begin{aligned}
	V(t) = \int_{t}^{t+\Delta t} \left\{ l_1(\vec{x}^*,\vec{u}^*)-l_2(t,\vec{x}^*) + b(t) \right\} \mathrm{d}t + \mathcal{L}(t+\Delta t) + L(\vec{x}(t+\Delta t),t+\Delta t)\,. \label{lyapunov}
%\end{aligned}	
\end{equation}
Note that all terms in \eqref{lyapunov} are positive, therefore, $V(t) > 0$. Now we need to show that the Lyapunov function \eqref{lyapunov} is decreasing. In fact, the inequality $V(t+\delta) < V(t)$ should hold for any $\delta>0$, such that $t+\delta \in [t_j \ \ t_{j+1}-\Delta t]$. At time $t+\delta$:
\begin{equation}
\begin{aligned}
	V(t+\delta) =& \int_{t+\delta}^{t+\delta+\Delta t} \left\{ l_1(\vec{x}^*,\vec{u}^*) - l_2(t,\vec{x}^*) + b(t) \right\} \mathrm{d}t+ \mathcal{L}(t+\delta+\Delta t) \\
	+& L(\vec{x}(t+\delta+\Delta t),t+\delta+\Delta t)\\
	=& V(t) -\int_{t}^{t+\delta} \left\{ l_1(\vec{x}^*,\vec{u}^*) - l_2(t,\vec{x}^*) + b(t) \right\} \mathrm{d}t + \int_{t+\Delta t}^{t+\delta+\Delta t} \left\{ l_1(\vec{x}^*,\vec{u}^*) \right\} \mathrm{d}t \\
	+& \int_{t+\Delta t}^{t+\delta+\Delta t} \left\{ -l_2(t,\vec{x}^*) + b(t) \right\} \mathrm{d}t + \mathcal{L}(t+\delta+\Delta t) -\mathcal{L}(t+\Delta t) \\
	+& L(\vec{x}(t+\delta+\Delta t),t+\delta+\Delta t) - L(\vec{x}(t+\Delta t),t+\Delta t).
\end{aligned}	
\end{equation}
Based on the assumption \eqref{assumptionsl2}, we have
\begin{equation}
%\begin{aligned}
	\int_{t+\Delta t}^{t+\delta+\Delta t} \left\{ -l_2(t,\vec{x}^*) + b(t) \right\} \mathrm{d}t \leq \int_{t+\Delta t}^{t+\delta+\Delta t} b(t) \mathrm{d}t =\mathcal{L}(t+\Delta t)-\mathcal{L}(t+\delta+\Delta t),
%\end{aligned}	
\end{equation}
and from assumption \eqref{assumptionsL}, one can conclude
\begin{equation}
%\begin{aligned}
	-\int_{t+\Delta t}^{t+\delta+\Delta t} l_1(\vec{x}^*,\vec{u}^*) \mathrm{d}t \geq \int_{t+\Delta t}^{t+\delta+\Delta t} \dot{L} \mathrm{d}t = L(\vec{x}(t+\delta+\Delta t),t+\delta+\Delta t) - L(\vec{x}(t+\Delta t),t+\Delta t).
%\end{aligned}	
\end{equation}
Therefore,
\begin{equation}
%\begin{aligned}
	V(t+\delta) \leq V(t) -\int_{t}^{t+\delta} \left\{ l_1(\vec{x}^*,\vec{u}^*) - l_2(t,\vec{x}^*) + b(t) \right\} \mathrm{d}t < V(t).
%\end{aligned}	
\end{equation}
This proof guarantees the convergence of the closed loop state trajectory $\vec{x}(t)$ to the origin. The stability in the sense of Lyapunov can be proved for all intervals $[t_j \ \ t_{j+1})$. The basin of attraction obtained from Lyapunov function in $[t_j \ \ t_{j+1})$ contains the goal state of the previous interval, $\vec{x}(t_j)$. Therefore, the optimal control obtained here will stabilize the closed loop system.
\end{proof}

Note that the effect of the terminal cost on the stability of the closed loop system has been studied for linear systems as well as nonlinear systems \cite{Fontes01}. By replacing $l_1$ with $l_1+l_2$ in \eqref{assumptionsL}, we obtain the well known condition on the final cost for convergence to the origin of the finite-horizon optimal control problem; this step of our analysis has been inspired by the work of Alessandretti et al. \cite{alessandretti2013model}.

%%%%%%%%%%%%%%%%%%%%%%%%%%%%%%%%%%%%%%%%%%%%%%%%%%%%%%%%%%%%%%%%%%%%%%%%%%%%%%%%
\section{An Illustrative Example}
\label{sec:simulation}

The control synthesis procedure presented in this paper is now illustrated on a holonomic system with a nonlinear measurement. The dynamical system is given by
\begin{equation}
\begin{aligned}
   \dot{x}_1& = u_1\,, \\
   \dot{x}_2& = u_2\,.  \label{holonomic}
\end{aligned}
\end{equation}
In this case, we have a linear system $\dot{\vec{x}} = A\vec{x}+B\vec{u}$, where, $A=0$ and $B=I$. 
By choosing $Q=R=I$, and solving the algebraic Riccati equation $\displaystyle A^T P+P A-P B R^{-1} B^T P+Q=0$, we obtain $P=I$. Thus, the control policy $\displaystyle \vec{u}_{\text{LQR}} = -R^{-1}B^TP\vec{x}=-\vec{x}$ asymptotically stabilizes \eqref{holonomic}. Utilizing this state feedback controller requires full state estimation, which results in having an observable system. 
If any portion of the trajectory is not observable, we have no means of building the feedback controller. 
Due to this coupling between sensing and control in nonlinear systems, we can investigate the observability of the system after applying the controller. 
Assume that the position of the vehicle is continuously measured by an omni-directional camera centered at the origin. Then, the output function is given as 
\begin{equation} \label{NL_obs}
	y = \frac{x_2}{x_1},
\end{equation}
where $y \in \mathbb{R}$ is a bearing only measurement, providing information on 
the direction of the vehicle but not on the distance. 
The observability matrix is given by
\begin{equation}
	d\mathcal{O} =\frac{\partial}{\partial \vec{x}} \begin{bmatrix} y \\ \dot{y} \\ \vdots \end{bmatrix} = \begin{bmatrix} -\frac{x_2}{x_1^2} & \frac{1}{x_1} \\ 2\frac{x_2}{x_1^3}u_1-\frac{1}{x_1^2}u_2 & -\frac{1}{x_1^2}u_1 \\ \vdots & \vdots \end{bmatrix}.
\end{equation}
The system is locally observable if the observability matrix $d\mathcal{O}$, is full rank \cite{nijmeijer1990nonlinear}. The observability matrix for nonlinear system has infinite rows. Here, we can show that there exists a set of control inputs such that the observability matrix becomes full rank. Let us consider the first two rows of the observability matrix; then we have
$$ \det(d\mathcal{O}) = \frac{1}{x_1^3} \left( u_2 -\frac{x_2}{x_1}u_1 \right) \,.$$
Therefore, for all controls $u_2 \neq \frac{x_2}{x_1}u_1$ the observability matrix is full rank, and as a result, the system is observable. Now, if we substitute the optimal control obtained from LQR, $\displaystyle \vec{u}_{\text{LQR}} =-\vec{x}$, we have 
\begin{equation}
	\dot{y} = \frac{\dot{x}_2 x_1- \dot{x}_1 x_2}{x_1^2}= \frac{-x_1x_2+x_1x_2}{x_1^2} = 0.
\end{equation}
All other differential terms are zero ($\ddot{y} = y^{(3)} = \hdots = 0$).
Hence by applying the control $\vec{u}_{\text{LQR}} = -\vec{x}$, the observability matrix is not full rank, and the system is not observable with this choice of control. Therefore, while we have optimal controls for this type of linear system, the nonlinear measurement needs to be managed appropriately.

Now add the observability index to the optimization problem. Assume that the instantaneous cost is given by:
\begin{equation}
l(t)=\vec{x}(t)^T \vec{x}(t) + \vec{u}(t)^T \vec{u}(t) - e^{-t} \text{sat}_{\zeta} \left( \frac{1}{4\epsilon^2} \sum\limits_{i=1}^2  \left( y^{+i}(t)-y^{-i}(t) \right)^2 \right) \,,
\end{equation} 
and assume $t_f=100$. The final cost needs to be chosen to meet the condition \eqref{assumptionsL}. The final cost can be set to $L(\vec{x}_f) = \vec{x}_f^T Q_f \vec{x}_f$, where $Q_f = 0.1 I$. The time span $[t_0, t_f]$ is divided into segments $[0, 1), [1, 2), \cdots, [t_i, t_{i+1}), \cdots, [99, 100)$. Thus, we have 100 intervals, and for each interval the optimal $K_j^*$ for $t \in [t_j\ \ t_{j+1})$ is given by:
\begin{equation} \label{K*ex1}
\begin{aligned}
& K_j^* = \underset{K_j}{\text{argmin}}
& & \int_{t_j}^{t_{j+1}} \left\{ l_1(\vec{x},K_j)-l_2(t,\vec{x},\vec{x}^{\pm 1}, \vec{x}^{\pm 2}) \right\} \mathrm{d}t + \vec{x}(t_{j+1})^T Q_f \vec{x}(t_{j+1}) \,.
\end{aligned}
\end{equation}
Here, we compare the optimal control obtained from the LQR cost function without the observability term and the cost function given in \eqref{K*ex1} which considers the observability of the system. The resulting trajectories obtained from these two scenarios for the initial condition $\begin{bmatrix} -1 \\ 2 \end{bmatrix}$ are
depicted in \cref{CmprTrajs}. The optimal controls for these two scenarios are given in \cref{CTRLs}. As it can be seen in \cref{CTRLs}, initially the absolute values of the observability-based controls are smaller than their corresponding LQR optimal controls. This comes from the fact that by adding the observability term to the optimization, we in fact decrease the cost of being away from the origin. Therefore, the controller chooses a trajectory that is not necessarily the closest path to the origin. However, over time, the observability term becomes less dominant, and the absolute values of the observability-based control take the lead with respect to their corresponding LQR optimal control to satisfy the asymptotic stability condition. The observability-based optimal control initially keeps the system away from an unobservable trajectory, while keeping the shortest path to the origin, (as shown in \cref{CmprTrajs}), and eventually returns to the desired equilibrium. Although, the observability-based controller chooses the longer path, it guarantees the system observability in order to estimate the state information required for the state feedback control.

\begin{figure}[!h]
   \begin{center}
  	 {\includegraphics[width=.75\textwidth]{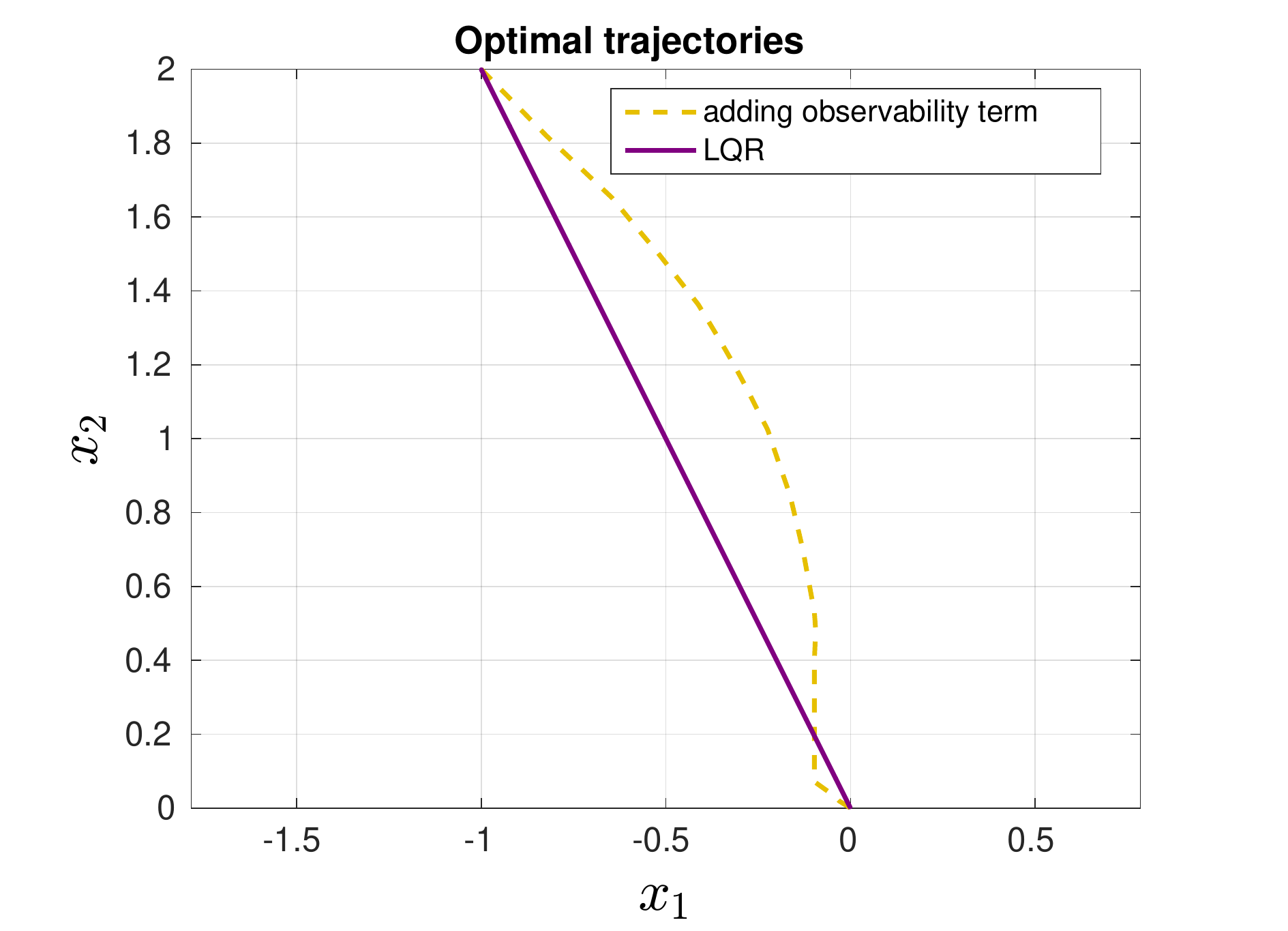}}
	 \vspace{-4 mm}
   \end{center}
   \caption{Trajectories for the system \eqref{holonomic} and measurement \eqref{NL_obs} using LQR optimal state feedback control without (solid, dark color) and with (dashed, light color) observability-based optimal control.}
   \label{CmprTrajs}
\end{figure}

\begin{figure}[!h]
   \begin{center}
  	 {\includegraphics[width=.7\textwidth]{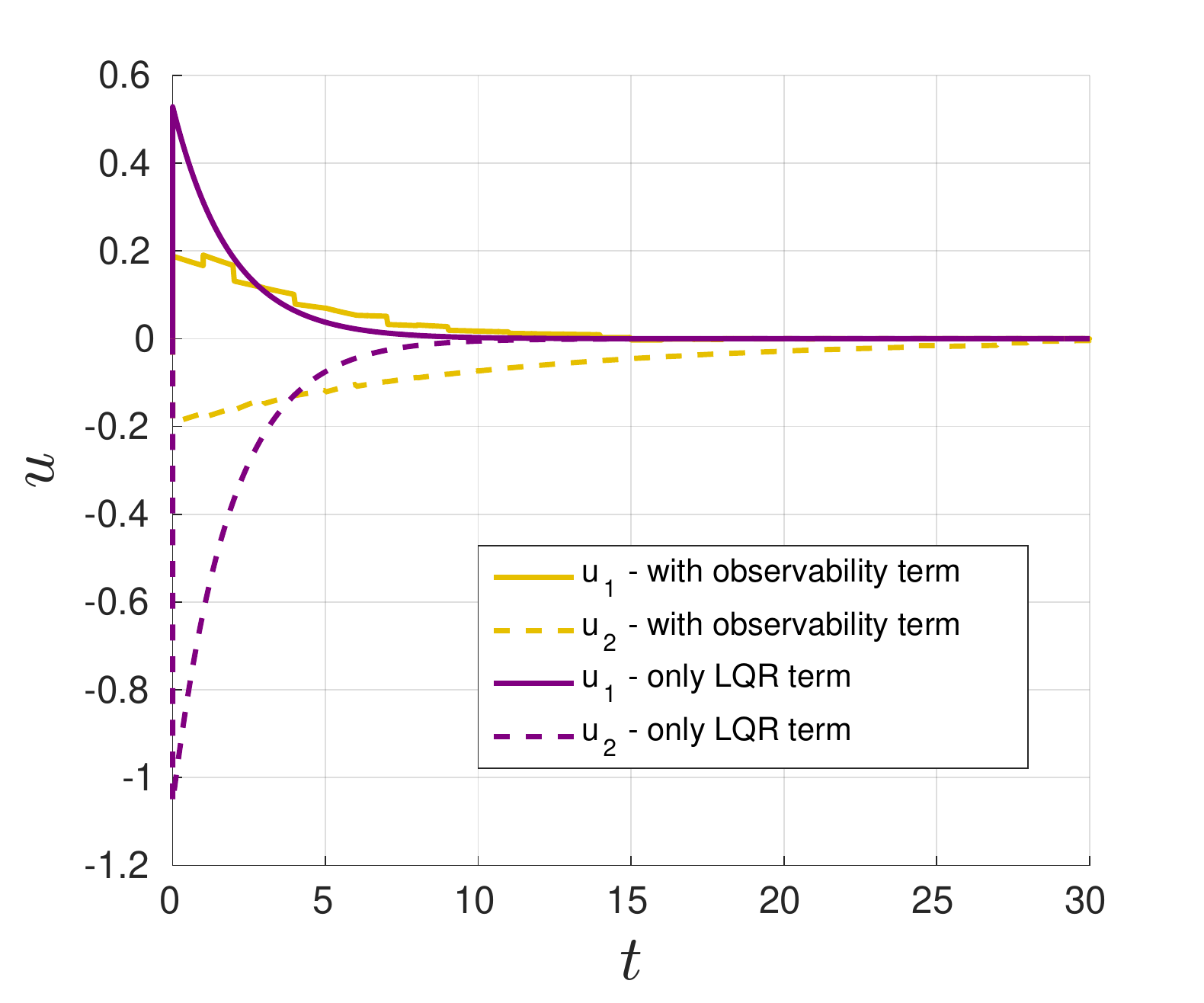}}
	 \vspace{-4 mm}
   \end{center}
   \caption{Optimal controls for the system \eqref{holonomic} and measurement \eqref{NL_obs} using LQR optimal state feedback control without (dark color) and with (light color) observability-based optimal control.}
   \label{CTRLs}
\end{figure}

%%%%%%%%%%%%%%%%%%%%%%%%%%%%%%%%%%%%%%%%%%%%%%%%%%%%%%%%%%%%%%%%%%%%%%%%%%%%%%%%
\section{Conclusion}
\label{sec:conclude}

This paper has been concerned with obtaining a suboptimal control for a nonlinear system based on a nonlinear observability criteria. We proposed an algorithm to optimize a finite-horizon cost function that contains a term that determines asymptotic behavior of the system in addition to a transient term responsible for maximizing a notion of nonlinear observability.
In this direction, the empirical observability Gramian has been used for improving the local observability for nonlinear systems. In addition, stability of the system has been investigated; in particular, it was shown 
that under some conditions on the terminal cost, the stability of the closed loop system is guaranteed.

The cost function was a combination of quadratic and non-quadratic terms and 
motivated by the desire to maximize the observability of the nonlinear system. Similar to other global optimization problems, the main challenge in the proposed approach is the difficulty in computing the global
minimum. 
In order to work around this issue, a gradient-based recursive algorithm was used to obtain a piecewise optimal linear state feedback controller. 
Linear state feedback is not necessarily optimal control for general nonlinear systems. Furthermore, not all classes of nonlinear systems (e.g., nonholonomic systems) can be stabilized by smooth feedback control \cite{Lin95, Phat06} or by time invariant state feedback control. However, such systems can generally be stabilized by time varying oscillatory control, and the next step in our work is considering appropriate time varying oscillatory feedback control for stabilization while improving the observability of the system.

\section*{ACKNOWLEDGMENTS}
The authors thank Bill and Melinda Gates for their active support and their sponsorship through the Global Good Fund.

\section*{References}

\bibliography{citations}

\end{document}